\renewcommand{\int}[1]{$(\mathit{int}\cup{#1})$}
\newcounter{thecase} \setcounter{thecase}{0}
\newcommand{\case}[1]{\noindent
            \refstepcounter{thecase}\paragraph*{\textnormal{\textit{Case \arabic{thecase}: #1}}}
}
\newcommand{\bb}[1]{\ensuremath{\mathbf{#1}}}
\newtheorem{theorem}{Theorem}[section]
\newtheorem{lemma}[theorem]{Lemma}
\newtheorem{definition}[theorem]{Definition}
\newtheorem{observation}[theorem]{Observation}
\begin{document}


\title{1-String CZ-Representation of Planar Graphs}

\author{T.~Biedl\thanks{\texttt{biedl@uwaterloo.ca}, Research supported by NSERC.}~}
\author{M.~Derka\thanks{\texttt{mderka@uwaterloo.ca}, The second author was supported by the NSERC Vanier CGS.}}

\affil{\small{David R.~Cheriton School of Computer Science, University of Waterloo}}

\renewcommand\Authands{~and~}


%
\maketitle

\begin{abstract}
In this paper, we prove that every planar 4-connected graph has a CZ-representation---a
string representation using paths in a rectangular grid that contain at most one vertical segment. 
Furthermore, two paths representing vertices $u,v$ intersect precisely once whenever there
is an edge between $u$ and $v$. The required size of the grid is $n \times 2n$.
\end{abstract}

%

\section{Preliminaries}

A possible way of representing graphs is to assign to every vertex a curve so
that two curves cross if and only if there is an edge between the respective vertices. 
Here, two curves $\bb{u},\bb{v}$ \emph{cross} means that
they share a point $s$ internal to both of them and 
the boundary of a sufficiently small closed disk around $s$ 
is crossed by $\bb{u},\bb{v},\bb{u},\bb{v}$ (in this order).
The representation of graphs using crossing curves is referred to as a \emph{string representation},
and graphs that can be represented in this way are called \emph{string graphs}.

In 1976, Ehrlich, Even and Tarjan showed that every planar graph has a string representation~\cite{cit:tarjan}.
It is only natural to ask if this result holds if one is restricted to using
only some ``nice'' types of curves. In 1984, Scheinerman conjectured that all planar graphs can
be represented as intersection graphs of line segments~\cite{cit:scheinerman}.
This was proved first for bipartite graphs~\cite{cit:arroyo, cit:pach} with the strengthening
that every segment is vertical or horizontal. The result was extended to triangle-free
graphs, which can be represented by line segments with at most three distinct slopes~\cite{cit:castro}.

Since Scheinerman's conjecture seemed difficult to prove for all planar
graphs, interest arose in possible relaxations.
Note that any two line segments intersect at most once.
Define 1-STRING to be the class of graphs that are intersection graphs
of curves (of arbitrary shape) that intersect at most once.
The original construction of string representation for planar graphs 
given in~\cite{cit:tarjan} requires curves to cross multiple times. 
In 2007, Chalopin, Gon\c{c}alves and Ochem showed that every
planar graph is in 1-STRING~\cite{cit:chalopin-gonclaves-ochem, cit:chalopin-string}.  With respect to Scheinerman's
conjecture, while the argument of~\cite{cit:chalopin-gonclaves-ochem, cit:chalopin-string} shows that the prescribed number
of intersections can be achieved, it provides no idea on the complexity of curves that is required.   

Another way of restricting curves in string representations is to require them
to be \emph{orthogonal}, i.e., to be paths in a grid.  Call a graph a
{\em VPG-graph} (as in ``Vertex-intersection graph of Paths in a Grid'')
if it has a string representation with orthogonal curves.
It is easy to see that all planar graphs are VPG-graphs (e.g.~by generalizing
the construction of Ehrlich, Even and Tarjan).  For bipartite planar graphs,
curves can even be required to have no bends \cite{cit:arroyo, cit:pach}.
For arbitrary planar graphs bends are required in orthogonal curves, and
recently Chaplick and Ueckerdt showed that 2 bends per curve always suffice
\cite{cit:chaplick}.  Let {\em $B_2$-VPG} be the graphs that have
a string representation where curves are
orthogonal and have at most 2 bends; the result in
\cite{cit:chaplick} then states that planar graphs are in $B_2$-VPG.
Unfortunately, in Chaplick and Ueckerdt's construction, curves may cross 
each other repeatedly, and so it
does not prove that planar graphs are in 1-STRING.

The conjecture of Scheinerman remained open until 2009 when it was proved true by Chalopin and Gon\c{c}alves \cite{cit:chalopin-gonclaves-ochem}
who extended the technique used to prove their 1-STRING 
result~\cite{cit:chalopin-seg}. 

\paragraph*{Our results:} 
In this paper, we show that every planar 4-connected graph has a 
string representation that simultaneously satisfies the requirements for
1-STRING (any two curves cross at most once) and the requirements
for $B_2$-VPG (any curve is orthogonal and has at most two bends).
Our result hence re-proves, in one construction, the results
by Chalopin et al.~\cite{cit:chalopin-seg} and the result by
Chaplick and Ueckerdt \cite{cit:chaplick}, albeit only for 4-connected 
planar graphs. (We briefly discuss extensions in Section~\ref{sec:outlook}.)

In our construction all curves have one of four
possible shapes: C-shape, Z-shape, or their mirror images. We call such a 
representation a \emph{CZ-representation} (see Section~\ref{sec:def} for
the formal definition).

\begin{theorem}
\label{thm:main-claim}
Every $4$-connected planar graph has a 1-string CZ-rep\-re\-sen\-ta\-tion. 
\end{theorem}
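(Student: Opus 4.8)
The plan is to build the CZ-representation incrementally, following a \emph{canonical ordering} of the $4$-connected planar graph $G$. The feature of $4$-connectivity that we exploit is that such a graph admits a canonical ordering in which the vertices are grouped into \emph{chains}: an ordered partition $V_1,\dots,V_m$ of $V(G)$ where $V_1$ and $V_m$ lie on the outer face, and for $1<k<m$ the set $V_k$ induces a path $z_1z_2\cdots z_p$ that, in the graph $G_k$ formed by the first $k$ groups, appears consecutively on the outer cycle $C_k$ and has neighbours in $G_{k-1}$ only at its two ends, namely one contour vertex $c_\ell$ just left of $z_1$ and one contour vertex $c_r$ just right of $z_p$. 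The point is that in such an ordering every vertex has a very restricted set of ``downward'' neighbours, and this is exactly what makes it possible for a single vertical segment per curve to suffice; for a planar graph that is only $3$-connected the groups may be single vertices of high downward degree and no such bound holds.

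The construction then processes the chains $V_1,\dots,V_m$ from the bottom up, placing $V_k$ on a fresh horizontal line (row $k$), and maintaining the invariant that after $G_k$ has been drawn, the curves of the current contour vertices $C_k$ each occupy a short horizontal \emph{docking segment} on the topmost used row, and that these docking segments are pairwise disjoint and occur left-to-right in the cyclic order in which their vertices appear on $C_k$. Each curve will consist of at most two horizontal segments joined by a single vertical segment, i.e.\ a C-shape, a Z-shape, or a mirror image thereof. When $V_k=z_1\cdots z_p$ is added above the contour interval $c_\ell,\dots,c_r$: the path edges $z_iz_{i+1}$ are realised along row $k$ by letting consecutive docking segments of the $z_i$ share a single point; the edge $z_1c_\ell$ (resp.\ $z_pc_r$) is realised by giving $z_1$ (resp.\ $z_p$) a vertical segment that descends from row $k$, followed by a short horizontal segment meeting $c_\ell$'s (resp.\ $c_r$'s) docking segment in exactly one point; the interior chain vertices $z_2,\dots,z_{p-1}$ have no downward neighbour and need only a horizontal segment on row $k$ (a degenerate C-shape), later acquiring an upward vertical segment when some higher chain covers them. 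One then reassigns docking segments on row $k$ to the new contour $c_1,\dots,c_\ell,z_1,\dots,z_p,c_r,\dots$, using at most two columns per vertex of $V_k$.

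It remains to verify four things. First, every curve has exactly one vertical segment and is therefore one of the four allowed shapes; this is immediate from the construction. Second, adjacent curves cross exactly once: chain-internal edges and the two chain-to-contour edges are handled as above, each contributing a single transversal crossing, and covering a contour vertex that is \emph{not} adjacent to $V_k$ creates no crossing because the descending parts of $z_1,z_p$ lie in columns immediately adjacent to those of $c_\ell,c_r$ and never reach the docking columns of the covered vertices $c_{\ell+1},\dots,c_{r-1}$. Third, non-adjacent curves never cross: this follows from the invariant together with the column bookkeeping, since each vertical segment occupies a private column, docking segments are disjoint, and the descending stub of a chain endpoint only reaches its intended target. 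Fourth, $G_m$ uses $m\le n$ rows and, charging at most two columns to each vertex, at most $2n$ columns, giving the claimed $n\times 2n$ grid.

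The main obstacle is the third point, together with re-establishing the invariant at each step: because the C/Z shapes are rigid and the column budget is only $2n$, one must show that the vertical and horizontal segments of every later chain can be threaded through the already-drawn picture --- in particular past the docking segments of the freshly covered interior contour vertices --- without producing a single unintended crossing, while still forcing exactly one crossing for each real edge and choosing the left/right bend orientation of every endpoint curve consistently with the side on which its downward neighbour sits. Making these routing choices and the column assignment coexist, uniformly over all possible chain shapes, is where the real work lies.
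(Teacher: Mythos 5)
Your approach (an incremental construction driven by a canonical ordering) is genuinely different from the paper's, which instead recursively decomposes the graph along chords in the style of Whitney's Hamiltonicity proof and proves a stronger inductive statement about W-triangulations satisfying a chord condition. Unfortunately, the structural claim on which your whole construction rests is false. You assert that a $4$-connected planar graph admits an ordered partition into chains $V_k=z_1\cdots z_p$ whose neighbours in $G_{k-1}$ lie \emph{only} at the two ends $z_1,z_p$. A simple edge count rules this out already for triangulations (and the theorem must cover $4$-connected triangulations, e.g.\ the icosahedron): such an ordering would realise at most $(p_k-1)+2$ edges per group, hence at most $n+m-2\le 2n-3$ edges in total, whereas a triangulation has $3n-6$ edges. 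Equivalently, the region between a chain and the contour vertices it covers must be triangulated, which forces edges from interior chain vertices and covered contour vertices down into $G_{k-1}$. Kant-style canonical orderings do exist for $3$- and $4$-connected planar graphs, but they necessarily contain steps in which a vertex (or a chain endpoint) has three or more consecutive neighbours on the current contour.

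This is not a repairable bookkeeping issue but the crux of the problem: a vertex with $d\ge 3$ downward neighbours sitting in $d$ distinct columns cannot reach all of them with a curve that has a single vertical segment and at most two bends unless those neighbours' curves are arranged very carefully --- and arranging them is precisely what requires the global, top-down decomposition (splitting along chords, and the ``top graph / chain graph'' split in the chordless case) that the paper uses. Your final paragraph concedes that the routing ``is where the real work lies,'' but the difficulty is worse than threading segments past non-neighbours: with the false chain hypothesis removed, the local step itself cannot be carried out within the C/Z shape budget. If you want to salvage an incremental argument, you would need to say what replaces the two-end-attachment property when a fan of downward neighbours occurs, and that is essentially equivalent to re-deriving the paper's inductive invariant on where the curves of the current boundary vertices cross the bounding rectangle.
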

%

Since our construction has at most $n$ vertical and $2n$ horizontal line segments, and
since any orthogonal grid can be deformed to be on integer coordinates 
without empty rows
or columns, our construction can be embedded to a rectangular grid of size $n \times 2n$.
Note that none of the previous results provided an intuition of the required size of the grid.

Our approach is inspired by the construction of 1-string representations from 
2007~\cite{cit:chalopin-gonclaves-ochem, cit:chalopin-string}. 
The authors proved the result in two steps. First,
they showed that triangulations without separating triangles 
admit 1-string representations. By induction on the number of 
separating triangles, they then showed that 1-string representation
exists for any planar triangulation, and consequently for any 
planar graph. 

In order to show that triangulations without separating triangles
have 1-string representation, Chalopin et at.~\cite{cit:chalopin-string} used
a method inspired by Whitney's proof that 4-connected planar graphs
are Hamiltonian~\cite{cit:whitney}. Asano, Saito and Kikuchi later improved
Whitney's technique and simplified his proof~\cite{cit:ham-cycle}. 
Our paper uses the same approach as~\cite{cit:chalopin-string}, but borrows ideas from~\cite{cit:ham-cycle}
and develops them further to reduce the number of cases and hence
simplify the proof.

\section{CZ-Representation of $\mathbf{4}$-Connected Planar Graphs}
\label{sec:def}

Let us begin with a formal definition of a \emph{CZ-representation}.

\begin{definition}[CZ-representation]
\label{def:repre}
 A planar graph $G$ has a \emph{1-string CZ-representation} if every vertex $v$ of $G$ can be
 represented by a curve $\mathbf{v}$ such that: \begin{enumerate}
    \item Curve $\mathbf{v}$ is \emph{orthogonal}, i.e., it consists of horizontal and vertical segments.
    \item Curve $\mathbf{v}$ has at most two bends and at most one vertical segment (see Figure~\ref{fig:cz-curve}).
    \item Curves $\mathbf{u}$ and $\mathbf{v}$ intersect at most once, and $\bb{u}$ intersects $\bb{v}$ if and
	only if $(u,v)$ is an edge of $G$.
 \end{enumerate}
A {\em partial 1-string CZ-representation} is a 1-string CZ-representation of a subgraph of $G$.
 \end{definition}

\begin{figure}
\centering
    \includegraphics[width=.7\textwidth]{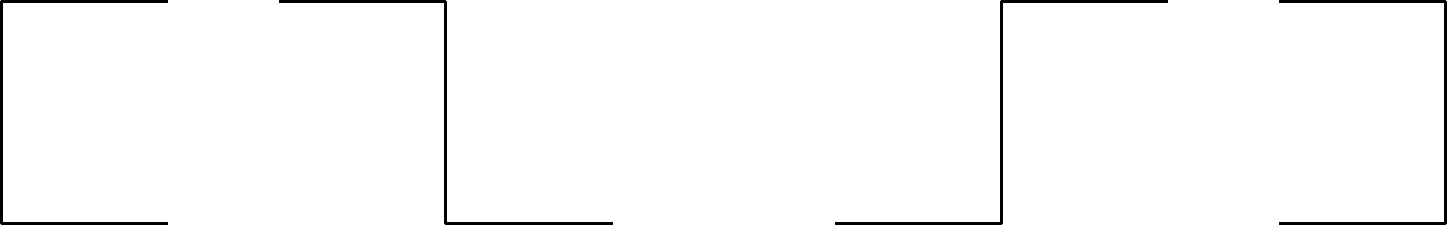}
    \caption{Every curve in a CZ-representation has one of the depicted shapes.}
    \label{fig:cz-curve}
\end{figure}

For brevity, we use ``CZ-representation'' to mean ``partial 1-string 
CZ-repre\-sentation''.
Our technique for constructing a CZ-representation of a graph uses an intermediate step
referred to as ``an \emph{\int{F}-CZ-representation}\footnote{Here, \emph{int} is used as an abbreviation of \emph{interior}.} of a
W-triangulation that satisfies the chord condition with respect to
three chosen corners''.
We define these terms first. 

A \emph{triangulated disk} is a 2-connected planar graph $G$ such that every 
interior face is a triangle. 
A {\em separating triangle} is a cycle of length $3$ which contains vertices
both inside and outside.   Following the notation of
\cite{cit:chalopin-string}, a \emph{W-triangulation} is a triangulated disk
which does not contain a separating triangle.
A {\em chord} of a triangulated disk is an
interior edge for which both endpoints are on the outer face.

For two vertices $X, Y$ on the outer face of a connected planar graph,
define $P_{XY}$ to be the counter-clockwise (ccw) path on the outer face from $X$ to $Y$ ($X$ and $Y$ inclusive).
We will often study triangulated disks with
three specified distinct vertices $A,B,C$ called the {\em corners}
which must appear on the outer face in ccw order. 
We denote $P_{AB} = (a_1, a_2, \ldots, a_r)$, $P_{BC} = (b_1, b_2, \ldots, b_s)$ 
and $P_{CA} = (c_1,c_2,\ldots,c_t)$, where $c_t = a_1 = A$, $a_r = b_1 = B$ 
and $b_s = c_1 = C$. 

\begin{definition}[Chord condition]
\label{def:chord-condition}
A W-triangulation $G$ satisfies the \emph{chord condition} with respect
to the corners $A,B,C$ if $G$ has no chord within $P_{AB}, P_{BC}$ or $P_{CA}$,
i.e., no interior edge of $G$ has 
both ends on $P_{AB}$, or both ends of $P_{BC}$, or
both ends on $P_{CA}$.%
\footnote{For readers familiar with \cite{cit:chalopin-string}
or \cite{cit:ham-cycle}:
A W-triangulation that satisfies the chord condition with respect
to corners $A,B,C$ is called a \emph{W-triangulation 
with 3-boundary $P_{AB},P_{BC},P_{CA}$} 
in \cite{cit:chalopin-string},
and the chord condition is 
the same as \emph{Condition (W2b)} in~\cite{cit:ham-cycle}.}%
\end{definition}

\begin{definition}[\int{F}-CZ-representation]
Let $G$ be a connected planar graph with corners $A,B,C$. 
Let $F$ be a set of outer face edges incident to $C$.
An \emph{\int{F}-CZ-representation}%
\footnote{An \int{F}-CZ-representation corresponds roughly to what Chalopin
et al.~\cite{cit:chalopin-string} call Property 1, except that they do not
restrict the shape of the curves, and they fix $F$ to be edge $(C,c_2)$.}
of $G$ is a CZ-representation of $G$ for which curves $\bb{u},\bb{v}$ cross if and
only if $(u,v)$ is an interior edge of $G$ or $(u,v) \in F$. 
Furthermore, the CZ-representation must satisfy that:
\begin{enumerate}
    \item There exists a rectangle $\Theta$ containing all intersections 
    of curves so that the top of $\Theta$ is intersected, from right to 
	left in order, by the curves of the vertices of $P_{AB}$, 
   and the bottom
    of $\Theta$ is intersected, from left to right in order, 
	by the curves of the vertices of $P_{BA}$. 
    \item The curve $\bb{v}$ of an outer face vertex $v$ has at most one
    bend. (By (1), this implies that $\bb{A}$ and $\bb{B}$ have no bends.)
\end{enumerate}
\end{definition}

See Figure~\ref{fig:partial-cz-ex} for examples of an \int{F}-CZ-representation.
In all our constructions, we have $|F|\leq 1$, i.e., $F$ consists of
at most one edge that is on the outer face and incident to $C$.
If $F=\{e\}$, then $e$ is called the \emph{special edge}.
We sometimes write \int{e}-CZ-representation rather than 
\int{\{e\}}-CZ-representation, and
int-CZ-representation rather than
\int{\emptyset}-CZ-representation.
Note that the roles of corners $A$ and $B$ in an \int{F}-CZ-representation 
are symmetric: we can exchange $A$ and $B$, as long as we also reverse
all cyclic orders of edges around each vertex (to preserve the sense
of counter-clockwise) and flip the resulting representation horizontally
(to undo the reversal.)
Corner $C$, on the other hand, is distinct from the other two, for 
example because the special edge must be incident to $C$. 
Our key result is the following:

\begin{figure}
    \centering
    \includegraphics[width=\textwidth]{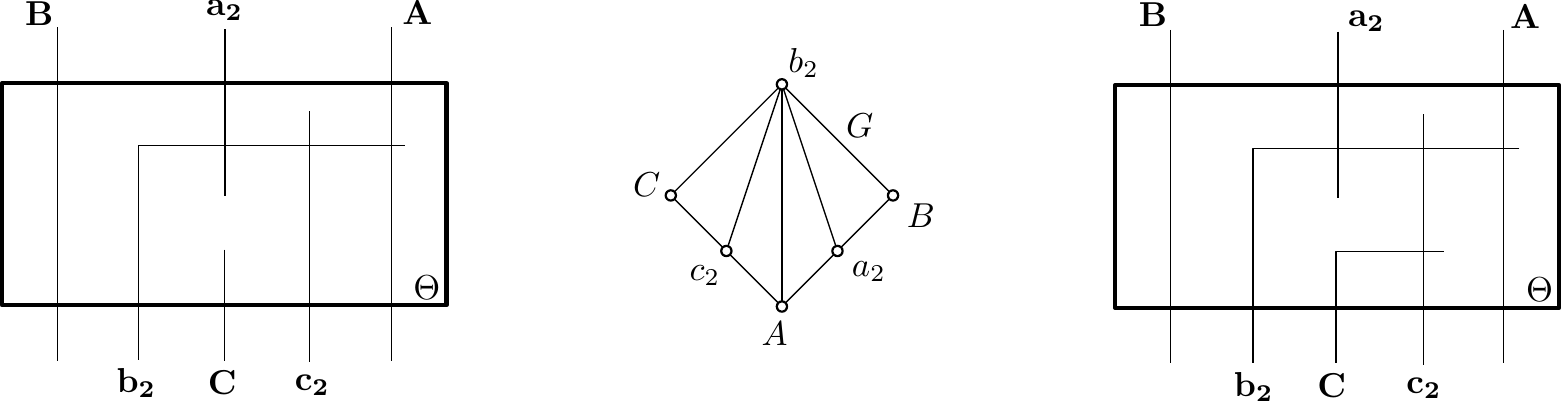}
    \caption{An int-CZ-representation (left) 
and an \int{(C,c_2)}-CZ-representation (right).}
    \label{fig:partial-cz-ex}
\end{figure}

\begin{lemma}
	\label{lem:representation}
    Let $G$ be a W-triangulation that 
    satisfies the chord condition with respect to corners $A,B,C$. 
	Then $G$ has an \int{F}-CZ-representation
    for any set $F$ of at most one outer face edge incident to $C$.
\end{lemma}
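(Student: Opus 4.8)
The plan is an induction on $|V(G)|$ in the style of the Hamiltonicity proofs of Whitney~\cite{cit:whitney} and of Asano--Saito--Kikuchi~\cite{cit:ham-cycle}, as adapted to string representations by Chalopin et al.~\cite{cit:chalopin-string}, but carrying along the extra ``$\le 2$ bends, $\le 1$ vertical segment'' bookkeeping of a CZ-representation. In the base case $G$ is a triangle $A,B,C$ with no interior edge: draw $\bb{A}$ and $\bb{B}$ as disjoint bend-free vertical segments, $\bb{A}$ on the right, forming the left and right sides of a rectangle $\Theta$ that they cross at the prescribed corners, and draw $\bb{C}$ as a curve with at most one bend along the bottom of $\Theta$, placed so that it crosses the bottom edge of $\Theta$ between the other two curves and additionally meets $\bb{c_2}$ or $\bb{b_{s-1}}$ inside $\Theta$ exactly if that edge is the special edge; the conditions of an \int{F}-CZ-representation are then immediate.

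\noindent For the inductive step ($|V(G)|\ge 4$) I split on whether $G$ has a chord. \textbf{Case 1 (chord).} By the chord condition every chord joins two of the three boundary paths, so a chord $e^\ast=(x,y)$ decomposes $G$ into triangulated disks $G_1,G_2$ with $G_1\cap G_2=e^\ast$. Choosing $e^\ast$ innermost (minimizing one side, or its distance from $C$), one can equip each $G_i$ with corners taken from $\{x,y\}$ together with one original corner, and with a special edge, so that $G_i$ is again a W-triangulation satisfying the chord condition: two of its boundary paths are subpaths of chord-free paths, and the third, which contains $e^\ast$, is chord-free by the innermost choice. Apply induction to get \int{F_1}- and \int{F_2}-CZ-representations in rectangles $\Theta_1,\Theta_2$; scale and place them side by side with aligned tops, concatenate the two copies of $\bb{x}$ and the two copies of $\bb{y}$, and observe that the result is an \int{F}-CZ-representation in $\Theta=\Theta_1\cup\Theta_2$ --- each boundary curve receives at most one bend from each side, so the shape bound survives.

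\noindent\textbf{Case 2 (no chord).} Then the neighbours of $C$ form a path $b_{s-1}=w_0,w_1,\dots,w_k,w_{k+1}=c_2$ with $w_1,\dots,w_k$ interior, and $G'=G-C$ is a triangulated disk whose boundary replaces $C$ by $w_1,\dots,w_k$. If some choice of corners for $G'$ (one near $A$, one near $B$, one a $w_i$) and special edge satisfies the chord condition, apply induction: the rectangle condition for $G'$ places $\bb{b_{s-1}},\bb{w_1},\dots,\bb{w_k},\bb{c_2}$ consecutively across the bottom of $\Theta'$, exactly where $C$ sat, so add a C- or Z-shaped $\bb{C}$ just below $\Theta'$ crossing precisely these $k+2$ curves, retract the now-interior curves $\bb{w_i}$ away from the boundary, extend $\bb{b_{s-1}}$ and $\bb{c_2}$ downwards, and enlarge $\Theta'$ to a $\Theta$ whose bottom is crossed in the order $b_1,\dots,b_{s-1},\bb{C},c_2,\dots,c_t$; one checks that $\bb{C}$ meets exactly $C$'s neighbours, creates no spurious crossing, and uses at most one bend. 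If instead every corner choice for $G'$ breaks the chord condition, the offending interior edge of $G$ among the $w_i,b_j,c_j$ exhibits a fan configuration at $C$ that is handled by a direct split of $G$, reducing to a variant of Case 1; the degenerate subcases $k=0$ (so $\deg C=2$) and $b_{s-1}$ or $c_2$ being a corner are handled similarly.

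\noindent\textbf{Main obstacle.} The work is not the case division but keeping three invariants compatible under the ``glue two rectangles'' and ``add $\bb{C}$'' operations: (i) every curve keeps $\le 2$ bends and $\le 1$ vertical segment --- the constraint absent from previous 1-string constructions, and the reason a fresh proof is required; (ii) the crossing pattern stays exact, with no crossing lost or gained when curves are concatenated, extended, or retracted; and (iii) the rectangle condition with its prescribed orders on the top and bottom and the at-most-one-bend rule for boundary curves. Reconciling these is what forces the innermost-chord choices and, above all, a careful discipline for threading the special edge $F$ through the recursion so that the chord condition survives in every subproblem --- the place where following Asano--Saito--Kikuchi rather than Whitney keeps the number of cases down. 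I expect the subcase of Case 2 in which deleting $C$ would destroy the chord condition to be the most delicate.
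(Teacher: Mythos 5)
Your base case and your chord cases follow the paper's plan (the paper splits the chord case into four subcases according to which boundary paths the chord joins, each with its own extremal choice and corner assignment, but your ``innermost chord, glue two rectangles'' sketch is the same idea). The genuine gap is in your Case 2. Your primary plan --- delete $C$, recurse on $G'=G-C$, and reattach $\bb{C}$ below --- only works when $G'$ admits corners satisfying the chord condition, and you correctly suspect it often does not: whenever some interior neighbour $u_j$ of $C$ has a neighbour on $P_{CA}$ (the generic situation in a triangulated disk), $G'$ has a chord from $u_j$ into $P_{CA}$ that no choice of corner among the $w_i$ can absorb. But your fallback, ``a direct split of $G$, reducing to a variant of Case 1,'' cannot work as stated: in this case $G$ has \emph{no} chord, so there is no edge along which to split $G$ into two triangulated disks. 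The paper instead splits along the interior \emph{path} $C$--$u_j$--$t_x$ (where $u_j$ is the first neighbour of $C$ with a neighbour on $P_{CA}$, and $t_1,\dots,t_x$ are $u_j$'s neighbours on $P_{CA}$), producing a top graph $G_T$ handled by induction with special edge $(u_1,u_2)$, and a bottom piece whose $u_j$-deleted remainder (the ``chain graph'') is \emph{not 2-connected}; it must be further decomposed into blocks $G_1,\dots,G_{x-1}$ between consecutive terminals, each shown separately to satisfy the chord condition, represented by induction, and then strung together. This fan/terminal/block machinery is the heart of the proof and is entirely absent from your proposal.

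Two smaller but real omissions in the same case: (i) even when your reattachment of $\bb{C}$ is possible, the edges $(w_i,w_{i+1})$ between consecutive interior neighbours of $C$ are outer-face edges of $G'$ (hence receive no crossing from the inductive \int{F'}-CZ-representation) but interior edges of $G$, so you must explicitly create those crossings; the paper does this by spending the second bend of each $\bb{u_i}$, which is legal only because these vertices are interior to $G$. (ii) Similarly, the fan edges $(u_j,t_i)$ and the edges along $P_{t_xC}$ between blocks all need crossings manufactured during the merge, with a bend budget that works out only because of the precise left-to-right ordering of vertical segments the paper prescribes. Your invariant list in ``Main obstacle'' correctly identifies these tensions, but the proposal does not resolve them where they actually bite.
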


The proof of Lemma~\ref{lem:representation} will be given in Section~\ref{sec:proof}. Here we show how it implies our main result.

\begin{proof}[Proof of Theorem~\ref{thm:main-claim}]
First assume that $G$ is a triangulation, which by 4-con\-nec\-tivity means that
it has no separating triangles. Let $A,B,C$ be the vertices on the outer face
in ccw order.
As the outer face is a triangle, $G$ clearly satisfies the chord condition with respect to $A,B,C$.
Thus, by Lemma~\ref{lem:representation}, it has an
\int{(B,C)}-CZ-representation contained in a rectangular box $\Theta$.  This
CZ-representation has an intersection for 
every edge except for $(A,B)$ and $(A,C)$. The  ends of curves $\bb{A}$ and $\bb{B}$
outside of $\Theta$ can be used to create intersections for these edges
as follows.
Bend and stretch the upper end of $\bb{B}$ rightwards and the upper end of $\bb{A}$ upwards so that
both the curves cross. 
Bend and stretch the lower end of curve $\bb{A}$ leftwards and stretch $\bb{C}$ downwards so that the two curves cross.
Recall that $\bb{A}$ and $\bb{B}$ initially did not have any bends, so they have each one bend in the
constructed 1-string 
$CZ$-representation of $G$. See Figure~\ref{fig:completion} for an illustration.

Now assume that $G$ is a $4$-connected planar graph. Then \emph{stellate} the graph, i.e.,
insert a vertex into each non-triangulated face and connect it to all vertices on that face.
The resulting graph is triangulated and has no separating triangles, so it has a 1-string CZ-representation
by the above. Deleting the curves of added vertices produces the result. 
\end{proof}

\begin{figure}
    \centering
    \includegraphics[width=.25\textwidth]{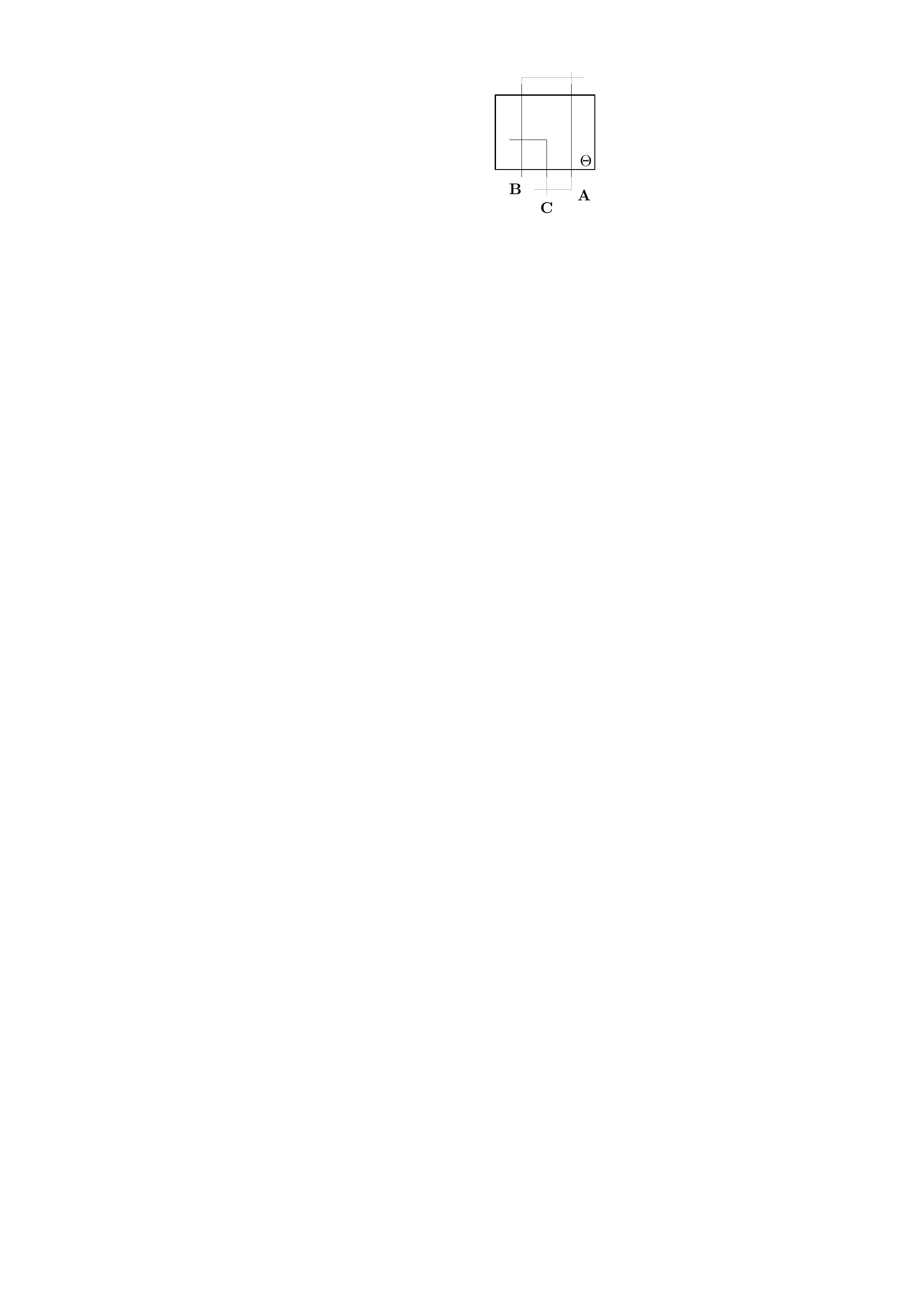}
    \caption{Completing the \int{F}-CZ-representation of a triangulation $G = (A,B,C;B)$.}
    \label{fig:completion}
\end{figure}

\section{\int{F}-CZ-representations}
\label{sec:proof}

In this section, we provide the proof of Lemma~\ref{lem:representation}.
We proceed by induction on the number of edges. In the base case, $n=3$, so
$G$ is a triangle, and the three corners $A,B,C$ must be the three vertices 
of this triangle.  The \int{F}-CZ-representations 
for $F=\emptyset, \{(A,C)\}, \{(B,C)\}$
are depicted in Figure~\ref{fig:base-case}.

\begin{figure}
	\centering
	\includegraphics[width=.8\textwidth]{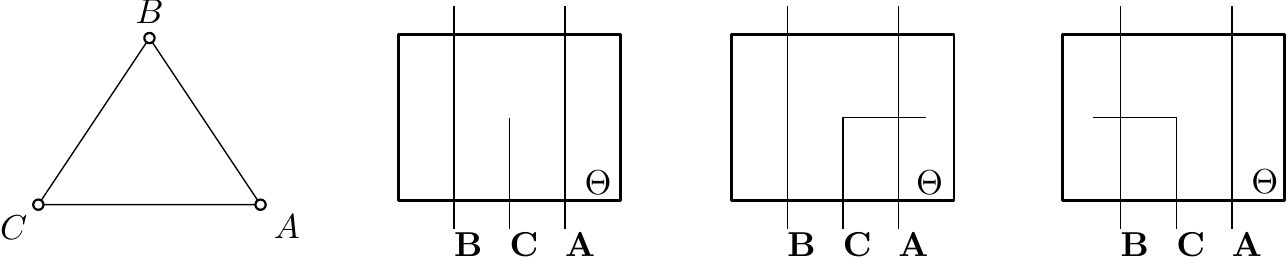}
	\caption{\int{F}-CZ-representations of a triangle.}
	\label{fig:base-case}
\end{figure}

The induction step for $n \geq 4$ is divided into five cases.

\case{$G$ has a chord incident to $C$.}  
\label{case:special}
By the chord condition, this chord has the form $(C,a_i)$ for some $1 < i < r$.
The graph $G$ can be split along the chord $(C,a_i)$ into two graphs $G_1$ and
$G_2$.  Both $G_1$ and $G_2$ are bounded by simple cycles, hence triangulated
disks.  No edges were added, so neither $G_1$ nor $G_2$
contains a separating triangle.   We select $(A,a_i,C)$ as corners for $G_1$
and $(B,C,a_i)$ as corners for $G_2$ and can easily verify that with this
$G_1$ and $G_2$ satisfy the chord condition: 
\begin{itemize}
\item $G_1$ has no chords on $P_{Aa_i}$ or $P_{CA}$ as they would violate the chord condition in $G$. 
There is no chord on $P_{a_iC}$ as it is a single edge.
\item $G_2$ has no chords on $P_{a_iB}$ or $P_{BC}$ as they would violate the chord condition in $G$. 
There is no chord on $P_{a_iC}$ as it is a single edge.
\end{itemize}

So, by induction, Lemma~\ref{lem:representation} holds for both $G_1$ and $G_2$.

\begin{figure}
\centering
\includegraphics[width=.35\textwidth]{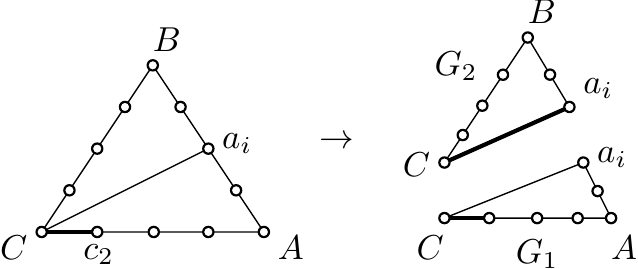}\hspace{4em}
\includegraphics[width=.2\textwidth]{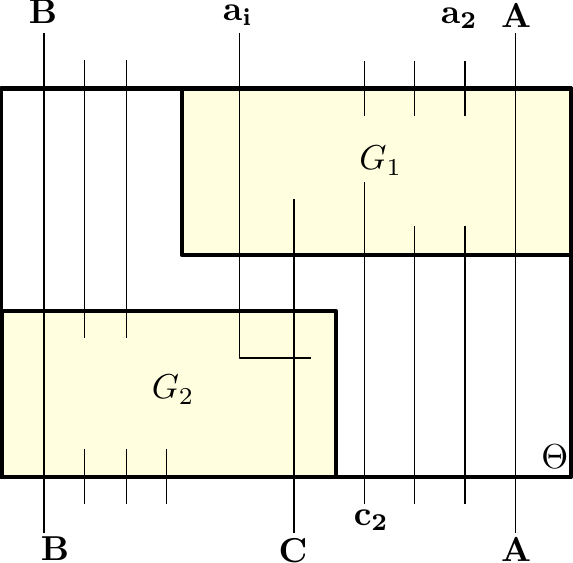}\hspace{4em}
\includegraphics[width=.2\textwidth]{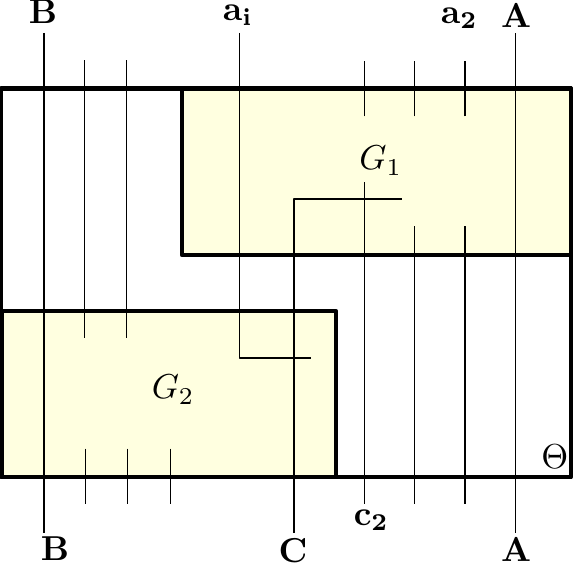}
\caption{Case~\ref{case:special}: Constructing an 
\int{F}-CZ-representation for $F=\emptyset$ (left) and $F=\{(C,c_2)\}$ 
(right) when $C$ is incident with a chord.}
\label{fig:case0}
\end{figure}

For $F=\emptyset$ or $F=\{(C,c_2)\}$,
an \int{F}-CZ-representation of $G$ 
can be constructed as follows. Inductively, construct an 
\int{F}-CZ-representation of $G_1$ and an 
\int{(C,a_i)}-CZ-representation of $G_2$; note that the special edge
of each indeed attaches at the required corner. Rotate
the CZ-representation of $G_2$ by 180$^\circ$, and translate it
so that it is below the CZ-representation of $G_1$ with the two
copies of $\bb{a_i}$ in the same column.  Stretch one of the CZ-representations 
horizontally as needed until the two copies of $\bb{c_j}$ are also in
the same column; then $\bb{a_i}$ and $\bb{c_j}$ can each be unified without
adding bends by adding vertical segments.    Stretch the CZ-representation
of $G_2$ further so that everything to the left of $\bb{a_i}$ in $G_2$
appears to the left of the entire CZ-representation of $G_1$; the curves
of outer face vertices of $G$ then cross (after suitable lengthening)
a bounding box in the required order.  
See also Figure~\ref{fig:case0}.
The construction does not create any new bends. Since $\mathbf{a_i}$ has no bends in the CZ-representation of $G_1$ and $\bb{C}$ has
no bends in the CZ-representation of $G_2$, the number of bends on each curve on the outer face does not exceed $1$.

This finishes the construction for $F=\emptyset$ or $F=\{(C,c_2)\}$.  
An \int{(C,b_{s-1})}-CZ-representation can be obtained using the
same construction after exchanging the roles of vertices $A$ and $B$
as described earlier.

\case{$G$ has a chord in the form $(a_i,c_j)$, $1 \leq i \leq r$, $1 \leq j \leq t$.}
\label{case:case1}
We may assume that $j > 1$ (otherwise we are in Case~\ref{case:special}), 
and $i>1$ and $j<t$ (otherwise the chord condition is violated).

\begin{figure}
\centering
\includegraphics[width=.5\textwidth]{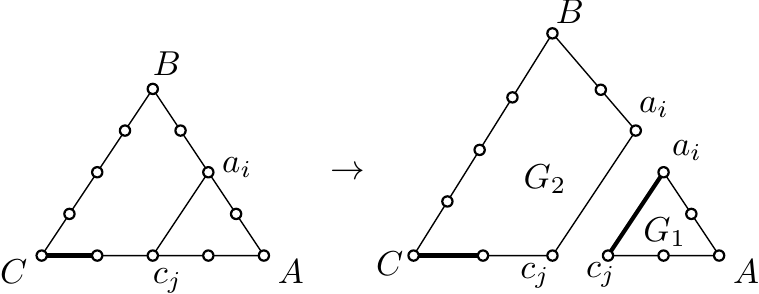}\hspace{5em}
\includegraphics[width=.3\textwidth]{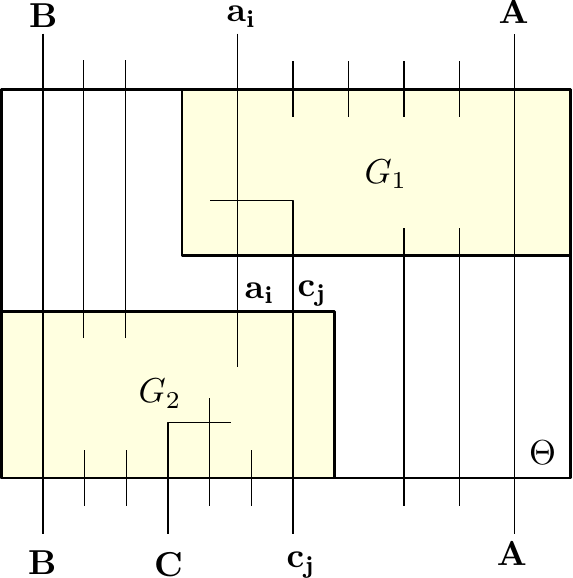}
\caption{Case~\ref{case:case1}: Constructing an int-CZ-representation of $G=(A,B,C)$ with a chord of 
the form $(a_i,c_j)$, $1 < i \leq r$, $1 < j < t$.}
\label{fig:case1-detail}
\end{figure}

Let $(a_i,c_j)$ be the chord that maximizes $i-j$ (i.e., it is the furthest chord from vertex $A$). 
Note that possibly $i = r$, i.e., $a_i = a_r = B$. 
Split the graph $G$ along this chord into graphs  $G_1$ (which contains $A$)
and $G_2$ (which contains $C$ and the special edge, if any).
Select $(A,a_i,c_j)$ as corners for $G_1$ and $(c_j,B,C)$ as corners for $G_2$.
As before both $G_1$ and $G_2$ are W-triangulations, and we can verify
that they satisfy the chord condition:
\begin{itemize}
\item $G_1$ has no chords on path $P_{Aa_i} \subseteq P_{AB}$ and $P_{c_jA} \subseteq P_{CA}$ as they would contradict the chord condition
in $G$. The remaining side $P_{c_ja_i}$ is a single edge, and so does not have any chords either.
\item $G_2$ has no chords on path $P_{Cc_j} \subseteq P_{CA}$ and $P_{BC}$ as they would contradict
the chord condition for $G$. Furthermore, $G_2$ has no chord on path $P_{c_jB}$ due to the selection
of $(a_i,c_j)$ and since $P_{AB}$ has no chords.
\end{itemize}

In order to construct an \int{F}-CZ-representation of $G$,
apply induction to get an \int{(a_i,c_j)}-CZ-representations of 
$G_1$ and an \int{F}-CZ-representation of 
$G_2$. Similarly as before, stretch the representations so that we can 
align and join the two 
curves $\mathbf{a_i}$ and $\mathbf{c_j}$ as shown in Figure~\ref{fig:case1-detail}. Since $\mathbf{a_i}$ 
has no bends in the CZ-representation of $G_1$ and $\mathbf{c_j}$ has no bends in the CZ-representation of $G_2$, the number of bends of those curves in the constructed representation is at most~1. The
number of bends of any other curve representing an outer face vertex does not change, so it is also at most $1$. Thus, the constructed representation is a valid
\int{F}-CZ-representation.
Figure~\ref{fig:case1-detail} shows the construction 
(with $F=\emptyset$) when $a_i \neq B$ and $2 < j $. 
Figure~\ref{fig:case1-special} shows the construction 
when $a_i = B$ or $c_j = c_2$.

\begin{figure}
\begin{center}
\includegraphics[width=.2\textwidth]{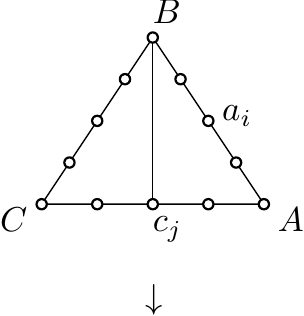}\hspace{4em}
\includegraphics[width=.2\textwidth]{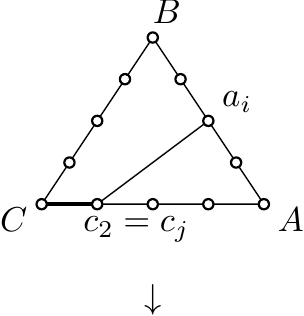}\hspace{4em}
\includegraphics[width=.2\textwidth]{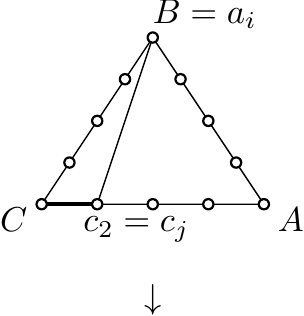}\\\medskip
\includegraphics[width=.2\textwidth]{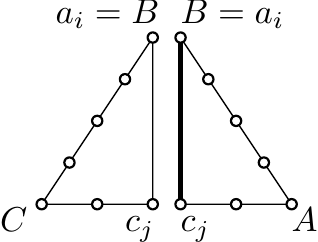}\hspace{4em}
\includegraphics[width=.2\textwidth]{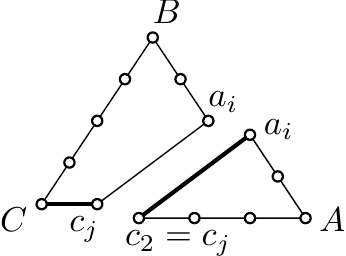}\hspace{4em}
\includegraphics[width=.2\textwidth]{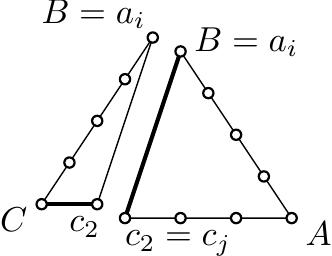}\\\medskip\medskip
\includegraphics[width=.2\textwidth]{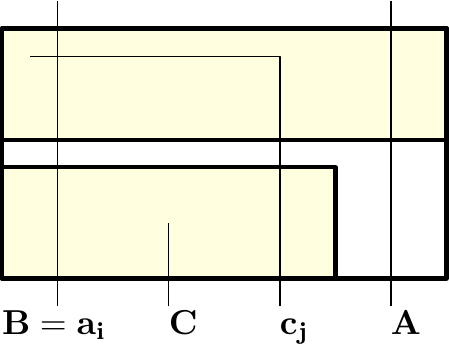}\hspace{4em}
\includegraphics[width=.2\textwidth]{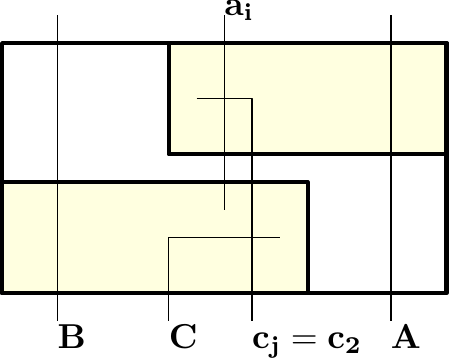}\hspace{4em}
\includegraphics[width=.2\textwidth]{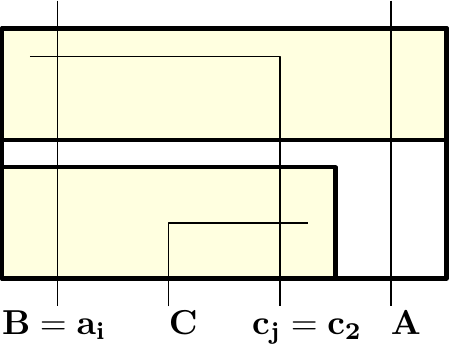}
\end{center}
\caption{The construction for Case~\ref{case:case1} also covers borderline cases:  (Left) The chord $(a_i,c_j)$ is incident with $B$. (Middle) The chord
is incident to the end $c_2$ of the special edge.  (Right) Both the incidencies are present.}
\label{fig:case1-special}
\end{figure}

\case{$G$ has a chord in the form $(a_i, b_j)$, $1 \leq i \leq r$, $1 \leq j \leq s$.}
\label{case:case1a}
By interchanging the roles of corners $A$ and $B$, this case can be transformed into Case~\ref{case:case1}.

\case{$G$ has a chord in the form $(b_j, c_k)$, $1 \leq j \leq r$, $1 \leq k \leq t$.}
\label{case:case2}
Note that we may assume $1 < j < r$ and $1 < k < t$ as all other cases either violate the chord condition
or were already covered.
Let $(b_j,c_k)$ be the chord maximizing $j-k$ (i.e., furthest from $C$). 

In order to construct an int-CZ-representation of $G$, split the graph 
along $(b_j,c_k)$ into two W-triangulations $G_1$ (which includes $C$
and the special edge, if any) and
$G_2$ (which includes $A$).  Set
 $(A,B,c_k)$ as corners for $G_1$ and $(c_k,b_j,C)$ as corners for $G_2$
and verify the chord condition:

\begin{itemize}
    \item $G_1$ has no chords on either $P_{Cc_k} \subseteq P_{CA}$ or $P_{b_jC} \subseteq P_{BC}$
        as they would contradict the chord condition in $G$.
        The third side is a single edge $(b_j,c_k)$ and so it does not have any chords either. 
    \item $G_2$ has no chords on either $P_{c_kA} \subseteq P_{CA}$ or $P_{AB}$ as they
    would violate the chord condition in $G$. It does not have any chords on the path $P_{Bc_k}$ due to 
    the selection of the chord $(b_j,c_k)$. 
\end{itemize}

Thus, by induction, $G_1$ has an \int{F}-CZ-representation and $G_2$ has
an \int{(b_j,c_k)}-CZ-representation. 
After horizontal deformation, the CZ-representations can be aligned 
so that the ends of $\bb{b_j}$ and $\bb{c_k}$ in $G_2$ can be connected to the upper ends of $\bb{b_j}$
and $\bb{c_k}$ in $G_1$. 
As $\bb{b_j}$ and $\bb{c_k}$ have no bends in the CZ-representation of $G_2$, the
construction does not increase the number of bends on any curve and produces 
an \int{F}-CZ-representation of $G$. 
Figure~\ref{fig:case2-construction} shows the construction.

\begin{figure}
\begin{center}
\includegraphics[width=.25\textwidth]{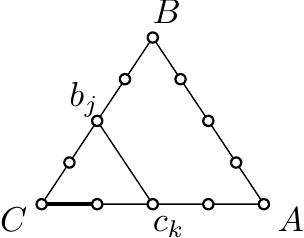}\hspace{2em}
\includegraphics[width=.25\textwidth]{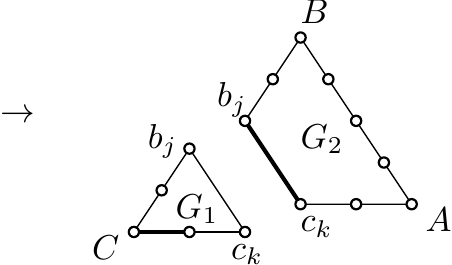}\hspace{3em}
\includegraphics[width=.25\textwidth]{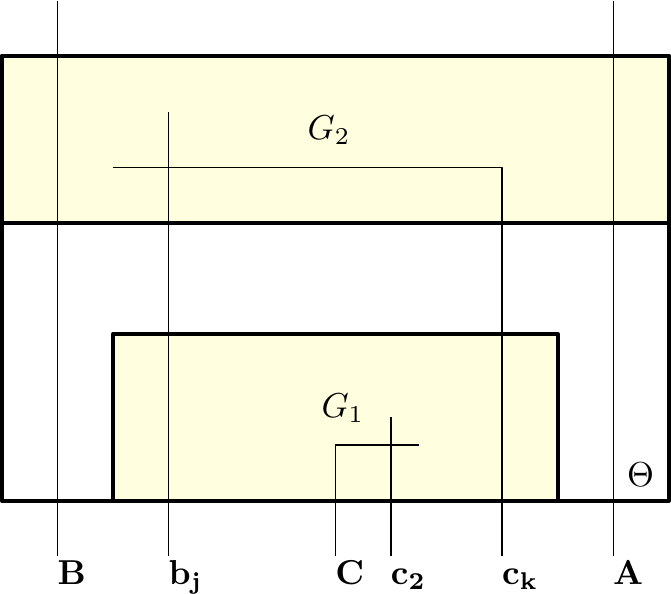}
\end{center}
\caption{Case~\ref{case:case2}: Construction of an \int{F}-CZ-representation of $G = (A,B,C)$
    with a chord $(b_j,c_k), 1 \leq j \leq r, 1 \leq k \leq t$.}
\label{fig:case2-construction}
\end{figure}

\case{$G$ has no chords.}
\label{case:chordless}
Assume after possible exchange of $A$ and $B$ that the special edge, if it exists, is $(C,c_2)$.
Let $u_1,\dots,u_q$ be the neighbours of vertex $C$ in clockwise order, starting
with $b_{s-1}$ and ending with $c_2$.  We know that $\deg(C)\geq 2$, for
otherwise the neighbours of $C$ would have a chord between them since $G$ is a 
triangulated disk.  Therefore, $C$ has at least one neighbour
$u_i, 1 < i < q$.  We also know that $u_2,\dots,u_{q-1}$ are not on the outer 
face, since $C$ is not incident to a chord.

Let $u_j$ be a neighbour of $C$ that has at least one other neighbour on 
$P_{CA}$, and among all those, choose $j$ to be minimal. Such a
$j$ exists and $j<q$  because $G$ is triangulated and 
therefore $u_{q-1}$ is adjacent to both $C$ and $u_q$.   
We also know that $j>1$, sicne otherwise there would be a chord from
$u_j=u_1=b_{s-1}$ to some vertex on $P_{CA}$.

Let the {\em terminals} be the neighbours of $u_j$
on $P_{CA}$; we denote these by $t_1,t_2,\dots,t_x$ in the order in which
they appear on $P_{CA}$. 
Separate $G$ into two graphs $G_T$ (the \emph{top} graph) and $G_B$ (the \emph{bottom} graph) as follows: $G_T$ is bounded 
by $(u_1, u_2, \ldots, u_j, t_x, \stackrel{P_{t_xA}}{\ldots}, A, \stackrel{P_{AB}}{\ldots}, B, \stackrel{P_{Bu_1}}{\ldots}, u_1)$;
and $G_B$ is bounded by $(C = t_1, \stackrel{P_{t_1t_x}}{\ldots}, t_x, u_j, t_1 = C)$. See also Figure~\ref{fig:top-bottom}.

\begin{figure}
	\centering
	\includegraphics[width=.6\textwidth]{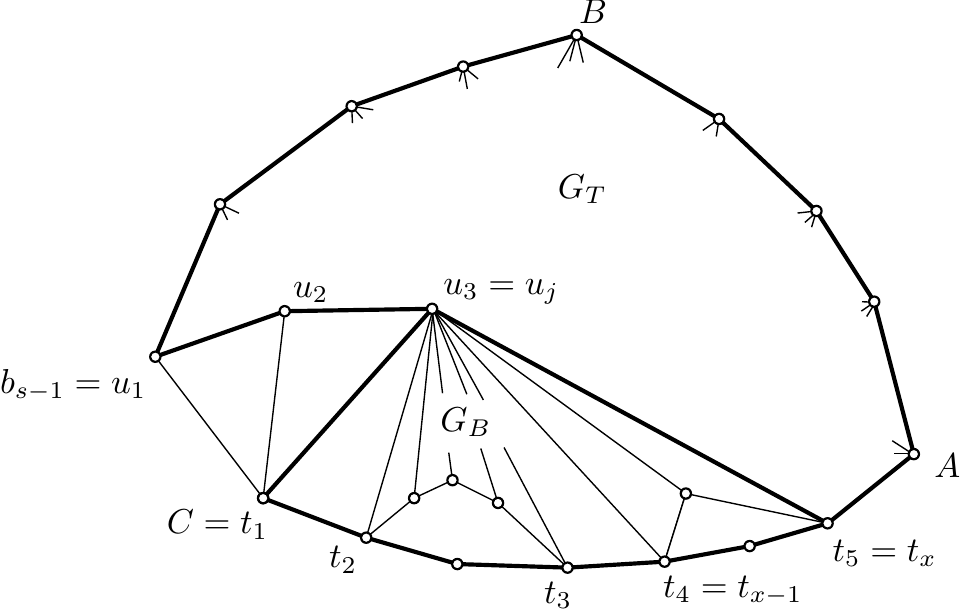}
	\caption{Decomposition into the top graph and bottom graph.}
        \label{fig:top-bottom}
\end{figure}

\begin{observation}
        The top graph $G_T$ is a W-triangulation that satisfies the chord condition with respect to corners
	$A':= A, B':= B$ and $C':=u_1$. 
\end{observation}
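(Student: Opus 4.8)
The plan is to establish the three parts of the claim separately --- that $G_T$ is a triangulated disk, that it has no separating triangle, and that it satisfies the chord condition with respect to $A':=A$, $B':=B$, $C':=u_1$ --- working throughout in the embedding inherited from $G$ and using only what was already fixed in Case~\ref{case:chordless}: that $1<j<q$, that $u_2,\dots,u_{q-1}$ (hence $u_j$) are interior vertices of $G$, that $u_1$ has no neighbour on $P_{CA}$ other than $C$ (this is exactly why $j>1$), that $j$ is smallest with the property that $u_j$ has a further neighbour on $P_{CA}$, and that the terminals $t_1=C,t_2,\dots,t_x$ are precisely the neighbours of $u_j$ on $P_{CA}$ listed in $P_{CA}$-order, so $t_x=c_m$ for some $m\geq 2$.

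First I would check that the bounding walk $\Gamma=(u_1,\dots,u_j,t_x,\stackrel{P_{t_xA}}{\ldots},A,\stackrel{P_{AB}}{\ldots},B,\stackrel{P_{Bu_1}}{\ldots},u_1)$ of $G_T$ is a simple cycle of $G$: all its edges exist ($u_iu_{i+1}$ bounds a triangular face with $C$, the edge $(u_j,t_x)$ is present since $t_x$ is a terminal, the rest are outer edges of $G$); the interior vertices $u_2,\dots,u_j$ do not lie on the outer-face subpaths $P_{t_xA},P_{AB},P_{Bu_1}$; and those three subpaths are pairwise internally disjoint, being parts of the three sides of $G$ that meet only in $A$ and in $B$ --- and not in $C$, since $t_x\neq C$ and $u_1\neq C$. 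Hence $G_T$, the subgraph of $G$ lying on and inside the simple cycle $\Gamma$, is bounded by a simple cycle whose interior faces are interior faces of $G$, so it is a triangulated disk, exactly as $G_1,G_2$ were in Case~\ref{case:special}. For the absence of separating triangles, suppose $T$ is a $3$-cycle of $G_T$ enclosing some vertex $v$ of $G_T$; since $G_T$ inherits the embedding of $G$, that same $v$ is enclosed by $T$ in $G$, while $C\notin V(G_T)\supseteq V(T)$ lies on the outer face of $G$ and so is enclosed by no cycle of $G$; thus $T$ would separate $v$ from $C$ in $G$, contradicting that $G$ is a W-triangulation.

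It remains to verify the chord condition. Along $\partial G_T$ the corners $A,B,u_1$ occur in this counter-clockwise order, and the three sides of $G_T$ are $P_{A'B'}=P_{AB}$, $P_{B'C'}=P_{Bu_1}\subseteq P_{BC}$, and $P_{C'A'}=(u_1,\dots,u_j,t_x,\stackrel{P_{t_xA}}{\ldots},A)$. The crucial remark is that every edge of $G_T$ not lying on $\Gamma$ is an interior edge of $G$: the only outer edges of $G$ absent from $\Gamma$ are those of $P_{u_1C}$ and of $P_{Ct_x}$, and each of these has an endpoint in $\{C,c_2,\dots,c_{m-1}\}$, none of which is a vertex of $G_T$. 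So a chord of $G_T$ inside $P_{A'B'}$, resp.\ inside $P_{B'C'}$, would be an interior edge of $G$ with both ends on $P_{AB}$, resp.\ on $P_{BC}$ --- a chord of $G$ forbidden by the chord condition; the same disposes of a chord of $G_T$ inside $P_{C'A'}$ whose ends both lie on $P_{t_xA}\subseteq P_{CA}$. Two possibilities remain for a chord of $G_T$ inside $P_{C'A'}$. A chord $u_au_b$ with $b\geq a+2$ cannot occur, since then $C,u_a,u_b$ would be a separating triangle of $G$ (it encloses $u_{a+1}$ yet leaves, say, $A$ outside); equivalently, as $G$ has no separating triangle and $C$ lies on the outer face, the neighbours $u_1,\dots,u_q$ of $C$ induce a path. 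And a chord from some $u_a$ ($a\leq j$) to some $c_k$ on $P_{t_xA}$ (so $c_k\neq C$) forces $a=j$ --- any $a<j$ contradicts the minimality of $j$ --- after which $c_k$, a neighbour of $u_j$ on $P_{CA}$, is a terminal, hence $c_k\in\{t_2,\dots,t_x\}$ and its index on $P_{CA}$ is at most $m$; as $c_k$ also lies on $P_{t_xA}$ its index is at least $m$, so $c_k=t_x$ and the edge is $(u_j,t_x)$, which lies on $\Gamma$ and is therefore not a chord. All cases being ruled out, $G_T$ satisfies the chord condition.

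The step I expect to be the crux is this last one --- chords inside $P_{C'A'}$ of the mixed type $u_ac_k$: it is precisely here that the minimality in the choice of $u_j$, together with the definition of the terminals, is used to guarantee that the only edge from a $u$-vertex into $P_{t_xA}$ is the boundary edge $(u_j,t_x)$. Everything else is bookkeeping, and the few degenerate configurations (for instance $t_x=A$, so that $P_{t_xA}$ is a single vertex, or $P_{Bu_1}$ being a single vertex) are each settled by a one-line check.
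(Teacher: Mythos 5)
Your proof is correct and follows essentially the same route as the paper: show the bounding walk of $G_T$ is a simple cycle (so $G_T$ is a W-triangulation), dispose of chords on $P_{A'B'}$ and $P_{B'C'}$ via the chord condition of $G$, and split chords on $P_{C'A'}$ into the same four subcases (two neighbours of $C$ give a separating triangle; $u_a$ to $P_{CA}$ with $a<j$ contradicts minimality of $j$; $u_j$ to $P_{CA}$ forces the edge $(u_j,t_x)$ by the definition of the terminals; both ends on $P_{CA}$ contradicts the chord condition). You additionally verify some details the paper leaves implicit (simplicity of the boundary cycle, inheritance of triangle-freeness, and that chords of $G_T$ are interior edges of $G$), all correctly.
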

\begin{proof}
Since $u_2,\dots,u_j$ are interior vertices of $G$, the outer face of $G_T$
is a simple cycle, and so $G_T$ is a $W$-triangulation.

    Since $G$ satisfies the chord condition, graph $G_T$ does not have any 
chords with both ends on $P_{A'B'} = P_{AB}$ or $P_{B'C'} \subseteq P_{BC}$.
If there were any chords with both ends on $P_{C'A'}$, then by $C'=u_1$
the chord would either
connect two neighbours of $C$ (hence give a separating triangle of $G$),
or connect some $u_i$ for $i<j$ to $P_{CA}$ (contradicting minimality of $j$),
or connect $u_j$ to some other vertex on $P_{CA}$ (contradicting that
$t_x$ is the last terminal), or have both ends on $P_{CA}$ (contradicting the chord condition
for $G$).    Hence no such chord can exist either.
\end{proof}

So, we can apply induction on $G_T$ and obtain an
\int{(u_1,u_2)}-CZ-rep\-re\-sen\-ta\-tion of $G_T$.  Similarly as in previous cases
the plan is to combine this with a representation of the rest.
Define $G_Q:= G_B-u_j$; we call this graph the {\em chain graph}.%
\footnote{Comparing to the terminology of \cite{cit:ham-cycle}, our chain
graph is similar to the graph $G'$ that is defined by the $Q$-chain and
satisfies Property (A).}
Unfortunately $G_Q$ is not
necessarily 2-connected, and so we cannot apply induction to it directly,
but we can obtain a CZ-representation for it by splitting it into smaller subgraphs.

For $i=1,\dots,x-1$, let $G_i^+$ be the graph bounded by $(t_i,
\stackrel{{\normalsize P_{t_it_{i+1}}}}{\ldots}, t_{i+1},u_j,t_i)$, 
and let the {\em $i^{\mbox{\scriptsize{th}}}$ block}
be the graph $G_i := G_i^+-u_j$.  See also Figure~\ref{fig:chain_blocks}.

\begin{figure}
	\centering
	\includegraphics[width=.45\textwidth]{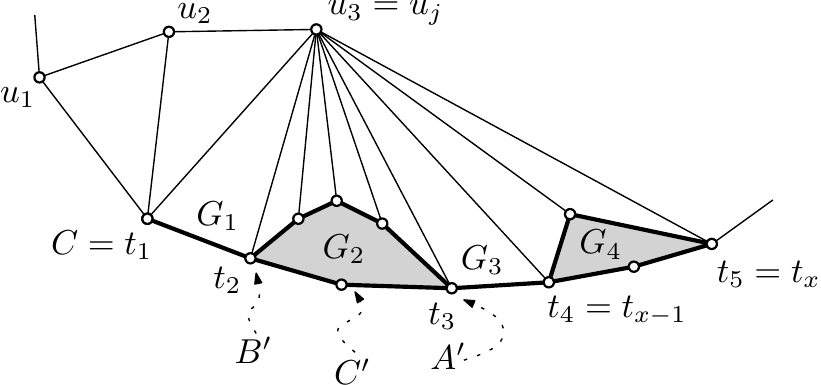}
	\caption{The chain graph. Blocks $G_1$ and $G_3$ are isolated
	edges, graphs $G_2$ and $G_4$ are W-triangulations.  We illustrate
	the chosen corners for $G_2$.}
	\label{fig:chain_blocks}
\end{figure}

\begin{observation}
\label{obs:chain-block}
The $i^{\mbox{\scriptsize{th}}}$ block $G_i$ is either a single edge, or a W-triangulation
that satisfies the chord condition with respect to corners
$A' := t_{i+1}, B' := t_{i}$, and $C'$ the successor of $t_i$ on $P_{t_i t_{i+1}}$.
\end{observation}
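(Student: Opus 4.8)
The plan is to prove Observation~\ref{obs:chain-block} by the same pattern used for the top graph: first establish that $G_i$ is either a single edge or a genuine W-triangulation (2-connected, internally triangulated, no separating triangle), and then check the chord condition for each of the three sides of $G_i$ determined by the proposed corners $A'=t_{i+1}$, $B'=t_i$, $C'$ = the successor of $t_i$ on $P_{t_i t_{i+1}}$.

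First I would handle the structural claim. The graph $G_i^+$ is bounded by the cycle $(t_i, \stackrel{P_{t_i t_{i+1}}}{\ldots}, t_{i+1}, u_j, t_i)$; its interior faces are interior faces of $G$, hence triangles. If $t_i$ and $t_{i+1}$ are consecutive on $P_{CA}$, then $G_i^+$ is just the triangle $(t_i, t_{i+1}, u_j)$ and $G_i = G_i^+ - u_j$ is the single edge $(t_i,t_{i+1})$ — this is the first alternative. Otherwise $P_{t_i t_{i+1}}$ has at least one internal vertex; I would argue that this boundary cycle of $G_i^+$ is simple (the path $P_{t_i t_{i+1}}$ is a subpath of the simple path $P_{CA}$, and $u_j$ is an interior vertex of $G$ so it is not on $P_{CA}$), so $G_i^+$ is a triangulated disk, and since no edges were added in the splitting it inherits the absence of separating triangles from $G$. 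Removing $u_j$ leaves $G_i$ bounded by the simple cycle consisting of $P_{t_i t_{i+1}}$ together with the edges from $t_i$ and $t_{i+1}$ to $u_j$'s other neighbours along — wait, more carefully: $G_i = G_i^+ - u_j$ is bounded by $P_{t_i t_{i+1}}$ plus the path through the common neighbours of $u_j$; here I would use that $u_j$'s only neighbours on $P_{CA}$ are the terminals $t_1,\dots,t_x$, so between $t_i$ and $t_{i+1}$ the vertex $u_j$ sees only interior vertices, and the link of $u_j$ restricted to this region is a path making $G_i$'s outer boundary a simple cycle; hence $G_i$ is a W-triangulation. The key point to be careful about is why the boundary stays simple after deleting $u_j$ — this relies on $u_j$ being adjacent to no chord-creating vertex and on $t_i, t_{i+1}$ being consecutive terminals, so no two portions of the link coincide.

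Next I would verify the chord condition for the three sides. With $B' = t_i$ and $A' = t_{i+1}$, the path $P_{B'A'}$ along $G_i$'s outer face going the "long way" corresponds exactly to $P_{t_i t_{i+1}} \subseteq P_{CA}$, so any chord with both ends there would be a chord of $G$ within $P_{CA}$, contradicting the chord condition for $G$. The side $P_{A'B'}$ from $t_{i+1}$ back to $t_i$ along $G_i$'s boundary — i.e.\ the path through $u_j$'s former neighbours — and the side $P_{B'C'}$ (a single edge, since $C'$ is the successor of $t_i$) need separate treatment: $P_{B'C'}$ being a single edge has no chord trivially. For $P_{A'B'}$: a chord of $G_i$ with both ends on this side would be an interior edge of $G$ joining two neighbours of $u_j$; together with $u_j$ this forms a triangle, which would be a separating triangle of $G$ (it has $t_i$-side vertices inside and the rest of $G$ outside) unless it is a face — and since $G_i$ has that path on its outer boundary, such an edge is not a face boundary, contradiction. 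I would also note the degenerate sub-cases (e.g.\ $P_{A'B'}$ is itself short) dispose of themselves.

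The main obstacle I expect is not the chord-condition bookkeeping but the clean verification that $G_i$'s outer boundary is a simple cycle after deleting $u_j$ — in particular ruling out that $u_j$ has a neighbour appearing twice in the relevant arc of its link, or that the link arc between $t_i$ and $t_{i+1}$ degenerates. This is where I would lean hardest on the hypotheses defining the terminals and the minimality of $j$: that $u_j$'s neighbours on $P_{CA}$ are precisely $t_1,\dots,t_x$ and that $u_2,\dots,u_{q-1}$ (hence $u_j$ for $1<j<q$) are interior vertices of $G$. Once simplicity is pinned down, the triangulated-disk and no-separating-triangle properties are immediate from "no edges added," and the chord condition follows by the three-bullet case analysis mirroring the proof of the top-graph observation.
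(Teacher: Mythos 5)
Your proposal is correct and follows essentially the same route as the paper: show the boundary of $G_i$ is a simple cycle (ruling out a repeated vertex via a double edge to $u_j$ or an extra terminal strictly between $t_i$ and $t_{i+1}$), then check the chord condition side by side, using that the two sides along $P_{CA}$ inherit chordlessness from $G$ and that a chord on the link side would give a separating triangle through $u_j$. The only point the paper makes explicitly that you gloss over is the distinctness of the three corners (in particular $C'\neq A'$ when $G_i$ is not a single edge), but your single-edge dichotomy covers the same ground.
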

\begin{proof}
Assume $G_i$ is not a single edge.  First note that no vertex can appear twice
on the outer face boundary of $G_i$, otherwise (since $G$ was a triangulated disk)
there would be a double edge to $u_j$, or another terminal on $P_{t_i t_{i+1}}$.
So $G_i$ is bounded by a simple cycle and hence a W-triangulation.

Before we can argue the chord condition, we must see that the corners are
distinct.  Clearly $B'=t_i\neq t_{i+1}=A'$ by the definition of terminals.
Also $C'\neq B'=t_i$ by definition of $C'$ as a neighbour of $t_i$.  
Finally, $C'\neq A'=t_{i+1}$, for otherwise $(t_i,t_{i+1})$ would be an
edge and $\{u_j,t_i,t_{i+1}\}$ hence a separating triangle of $G$ (since
$G_i$ is not a single edge).  Now we can verify the chord condition:
\begin{itemize}
\item $G_i$ has no chord on $P_{A' B'}$, since all vertices on $P_{A' B'}$ are 
	neighbours of $u_j$ and $G$ has no separating triangle.  
\item $G_i$ has no chord on $P_{B' C'}$ or $P_{C' A'}$, since both of
	these are sub-paths of $P_{CA}$ and $G$ satisfies the chord
	condition.
\end{itemize}
\end{proof}

Set $F_1:=F$ if $G_1$ is not a single edge and $F_1=\emptyset$ otherwise.
Set $F_i=\emptyset$ for $1<i<x$.  By induction, any $G_i$
that is not an edge has an \int{F_i}-CZ-representation.
If $G_i$ is a single edge $(t_i,t_{i+1})$, then we can represent it with two
vertical segments for $t_i$ and $t_{i+1}$.  We now merge these representations
of $G_1,\dots,G_{x-1}$ as illustrated in Figure~\ref{fig:chain-representation},
by merging for $i=2,\dots,x-1$ the two vertical segments $\bb{t_i}$.
The result satisfies all conditions for an 
\int{F_1}-CZ-representation 
of $G_Q$ (for corners $B':=C$ and $A':=t_x$).
%

\begin{figure}
    \centering
    \raisebox{-0.5\height}{\includegraphics[width=.4\textwidth]{chain}}
\raisebox{-0.5\height}{~~~$\rightarrow$~~~}
    \raisebox{-0.5\height}{\includegraphics[width=.35\textwidth]{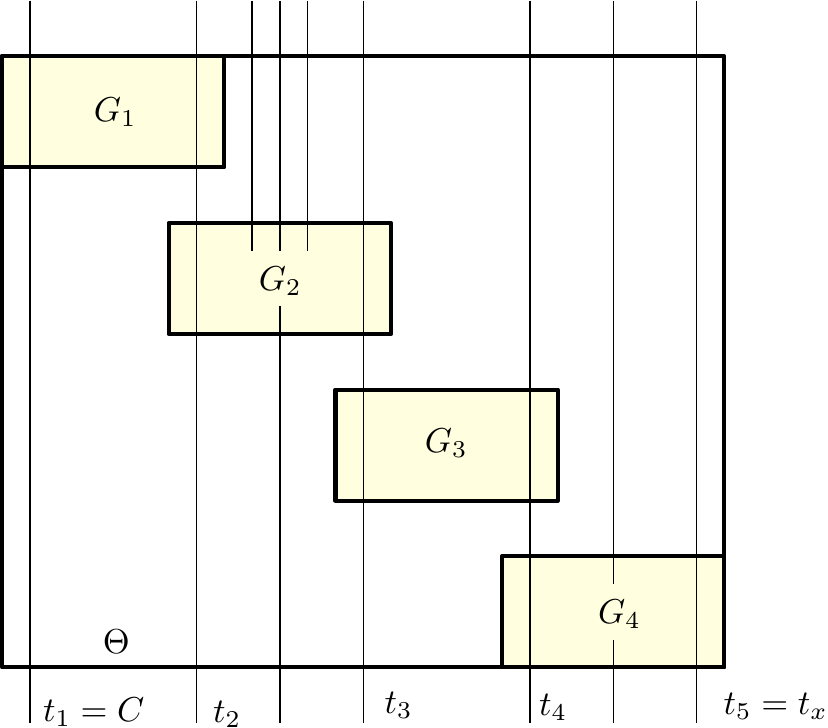}}
    \caption{Merging CZ-representations of the blocks to get a CZ-representation
of $G_Q$.}
    \label{fig:chain-representation}
\end{figure}

Now we merge this \int{F_1}-CZ-representation of $G_Q$ with the 
\int{(u_1,u_2)}-CZ-representation of $G_T$ as illustrated in
Figure~\ref{fig:case5extended}.
The neighbours of $G_Q$ in $G_T$ are vertices $u_1, u_2, \ldots, u_j$. 
The CZ-representation of $G_Q$ can be horizontally deformed and aligned so that 
it is below $G_T$ and the order of vertical segments is as in 
Figure~\ref{fig:case3-complete},
i.e., from left to right it is $\bb{u_1}, \bb{C}, \bb{u_2}, \bb{u_3}, \ldots, \bb{u_j}$, 
\{\bb{v} for $v\neq C$ in $G_Q$'s boundary path
$P_{t_xC}$ in reverse  order\},
\{\bb{c} for $c\neq t_x$ in $G$'s boundary path
$P_{t_xA}$ in order\}.

To create the required intersections, we first
extend $\bb{C}$ upward and (still below $G_T$) rightward until it crosses $\bb{u_2}, \ldots, \bb{u_j}$.
Note that as a result, $\bb{C}$ receives its first bend.

The CZ-representation of $G_T$ includes an intersection for the special
edge $(u_1,u_2)$, but does not include intersections for edges 
$({u_i},{u_{i+1}}), 2 \leq i < j$.
Since these are internal to $G$, such intersections need to be created. Do this for $i = 2,\ldots,j-1, j \geq 3$
by extending $\bb{u_i}$ vertically below its crossing with $\bb{C}$ and $\bb{u_{i-1}}$ and then horizontally
until it crosses $\bb{u_{i+1}}$ (which is adjacent due to the properties of a CZ-representation).
We assume for this that $\bb{u_{i+1}}$ has been extended vertically suitably. 
This increases the number of bends of $\bb{u_i}$ to at most 2, which is allowed
since $u_i$ is an internal vertex of $G$.

Curve $\bb{u_j}$ already extends below its crossing with $\bb{C}$ and
$\bb{u_{j-1}}$. Now, extend
it rightward so that it crosses all vertical segments of up to (and including) $\bb{t_x}$.
All these segments belong to vertices in $G_Q$'s boundary path $P_{t_xC }$,
which are indeed neighbours  of $u_j$ since $G$ is a triangulated disk.
Afterwards, $u_j$ has up to two bends, which is allowed since
it is not on the outer face of $G$.

It remains to create intersections for some of the edges in $G_Q$'s boundary
path $P_{t_x C}$.
We can create these intersections similarly as for $\bb{u_2}, \ldots, \bb{u_j}$
by extending each curve upward and rightward until it hits the next one.
This adds one bend in each curve, which is acceptable since the curve 
remains on the outer face only if it was a terminal before, hence had no bends 
previously.   Note that an edge in $P_{t_x C}$
possible does not need an intersection (see e.g.~edge $(t_3,t_4)$ in 
Figure~\ref{fig:case3-complete}), namely, if the $i^{\mbox{\scriptsize{th}}}$ block consists of
a single edge $(t_i,t_{i+1})$ and this edge is not in $F$.  In this case,
simply stop $\bb{t_i}$ before it intersects $\bb{t_{i+1}}$.

With this, all interior edges of $G$ receive exactly one intersection of
their corresponding curves.  If $F$ is non-empty, then the special edge
$(C,c_2)$ received its intersection either via the \int{F_1}-CZ-representation
of $G_1$ (if $G_1$ is a triangulated disk), or $(C,c_2)=(t_1,t_2)$ and 
the intersection was created when handling $P_{t_x C}$.
 
This ends the description of constructing an 
\int{F}-CZ-representation in Case~\ref{case:chordless},
and hence proves Lemma~\ref{lem:representation} and
Theorem~\ref{thm:main-claim}.

\begin{figure}
	\centering
	\includegraphics[width=.7\textwidth]{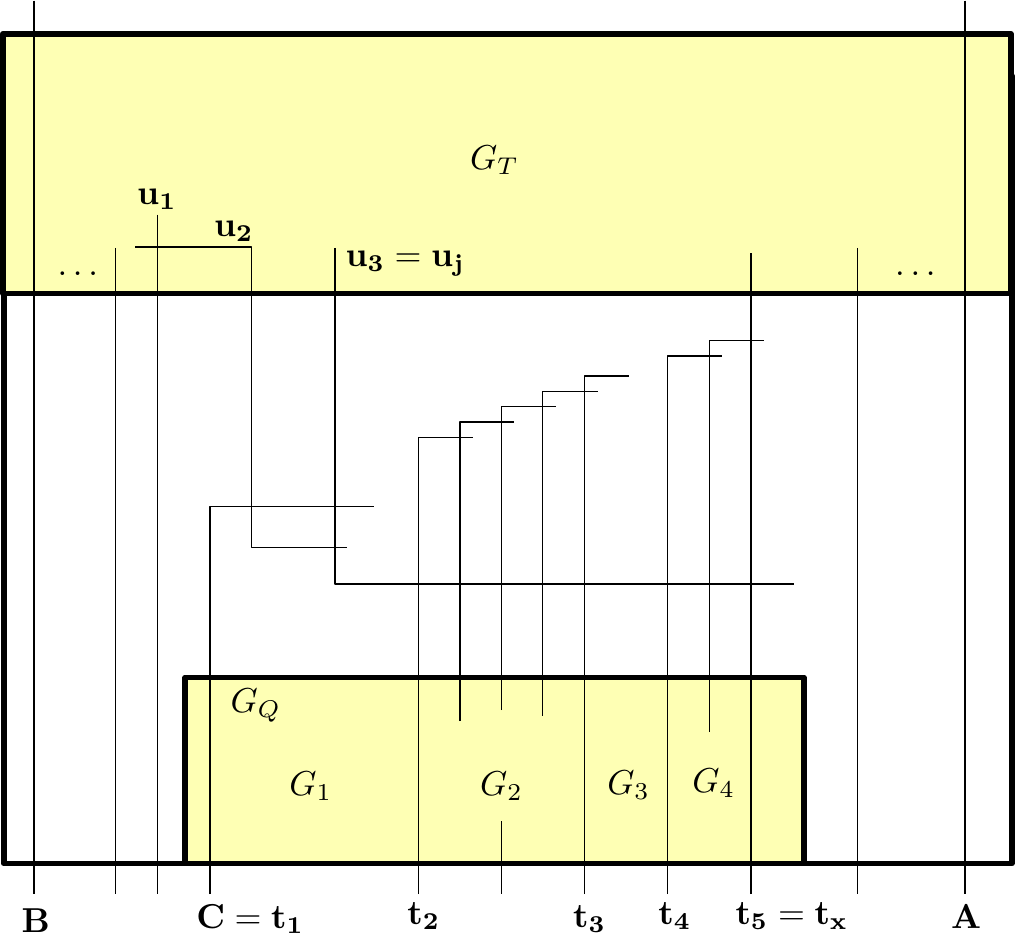}
	\caption{Obtaining an \int{F}-CZ-representation in Case~\ref{case:chordless}.}
	\label{fig:case3-complete}
\end{figure}

\begin{figure}
	\centering
        \raisebox{-0.5\height}{\includegraphics[height=.45\textwidth]{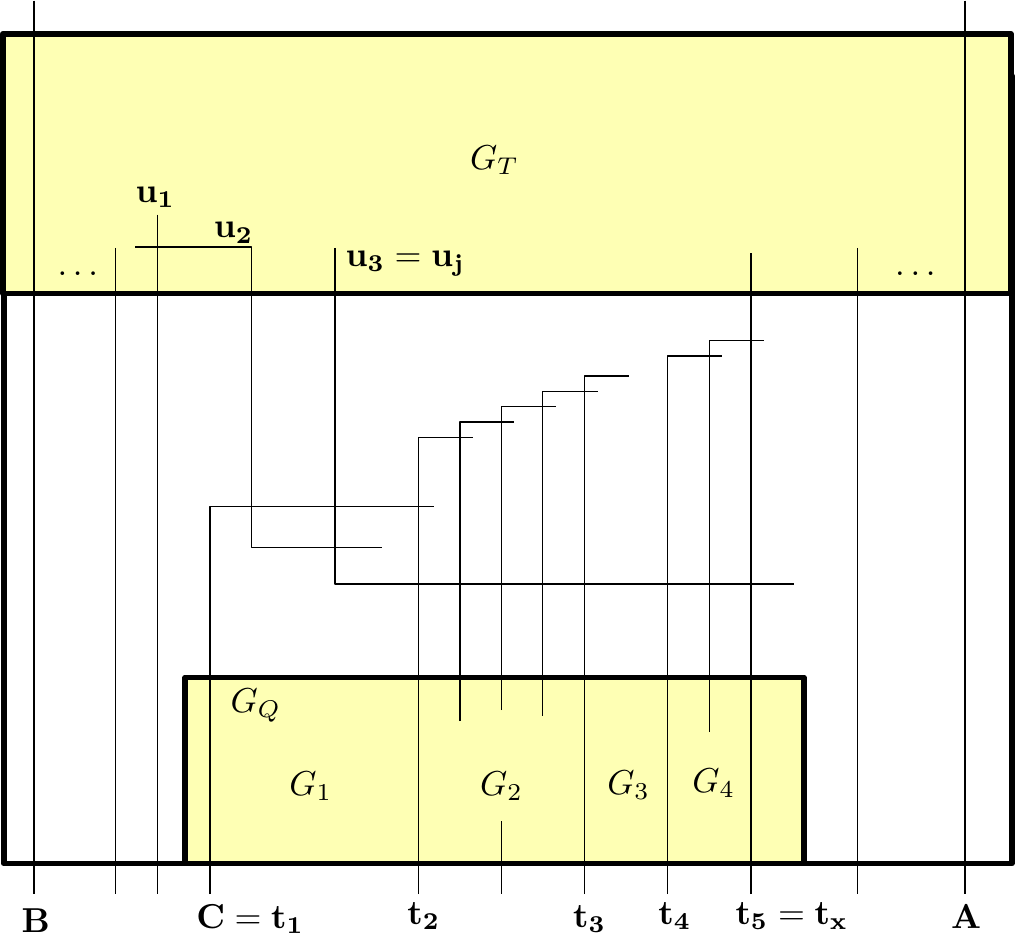}}
~~~~~~~~~~~
        \raisebox{-0.5\height}{\includegraphics[height=.46\textwidth,trim=260 0 0 0,clip]{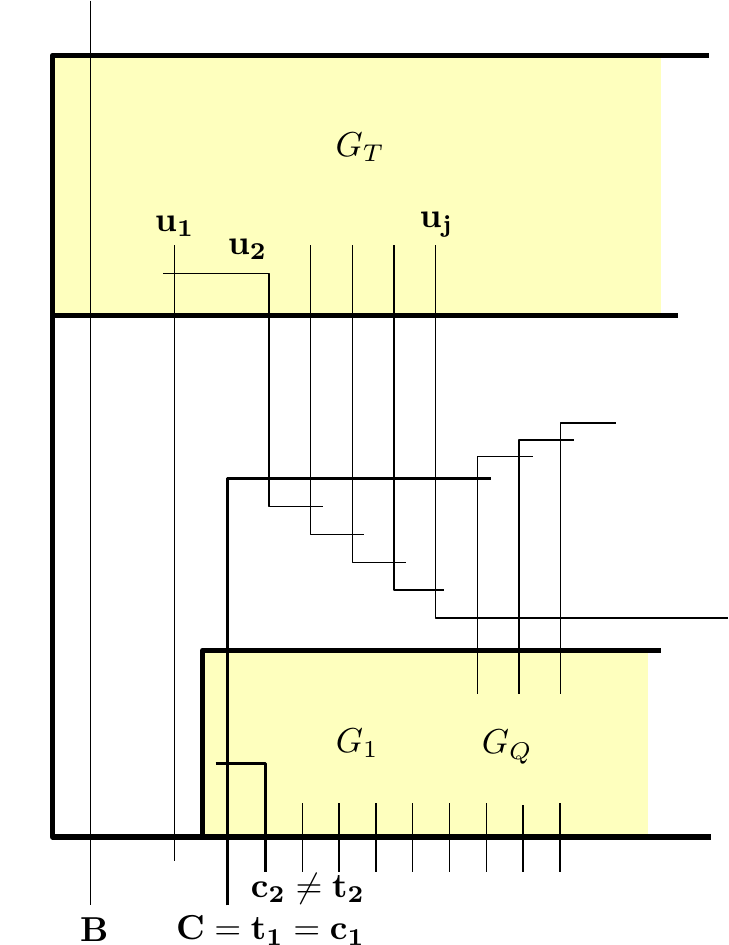}}
	\caption{If $F=(C,c_2)$, then the intersection of $\bb{C}$ and $\bb{c_2}$ is added during the merge if $G_1$ is a single edge (left), or occurs within
the CZ-representation of $G_1$ otherwise (right).}
        \label{fig:case5extended}
\end{figure}

\section{Conclusions and Outlook}
\label{sec:conclusions}
\label{sec:outlook}

We showed that every 4-connected planar graph has a 1-string CZ-representation. A natural 
question is to extend this result to all planar graphs. We believe that this is possible,
but currently cannot achieve fewer than $k=3$ bends.

By again stellating the graph (possibly repeatedly if it was not 3-connected),
it suffices to show that every $3$-connected
planar triangulation has a $1$-string $B_k$-VPG representation. This statement can be proved by induction
on the number of separating triangles by a technique used in~\cite{cit:chalopin-string}
(and re-discovered in~\cite{cit:mfcs}). With every triangular face,
create a ``face region'' (called ``private region'' in \cite{cit:mfcs})
that intersects the curves of vertices of the face
in a predefined way and does not intersect anything else. This is easy for 
W-triangulations by inspecting the constructions in Cases~\ref{case:special}--\ref{case:chordless}.

In the inductive step, find the smallest separating triangle $T$ in $\Delta$.
By induction, the graph $G_1$ obtained by removing the inside of
$T$ has a 1-string $B_k$-VPG representation. 
The graph $G_2$ strictly inside $T$ is either a single vertex or (as one
can show) it is a W-triangulation that satisfies the chord condition for
some suitably chosen corners, and hence has an int-CZ-representation.
Place it inside the face region for $T$,
create the intersections needed for edges on the outer face of $G_2$ and
edges between $G_1$ and $G_2$, and
identify face regions for newly created faces.

If one aims for a 1-string $B_3$-VPG representation, two bend per
can be added to each curve of outer face vertices of $G_2$, and hence 
such a merge is easy (details are omitted).

If one aims for a 1-string $B_2$-VPG representation of planar
graphs, only one bend can be added to each such curve, 
which seems impossible with our current 
geometric restrictions of \int{F}-CZ-representation. 
However, this might be feasible
if we allow the outer face vertices 
to use rays in three directions. This is our ongoing research.

\medskip
As for other future work,
the CZ-representation constructed in this paper uses curves of four possible shapes.
Is it possible to use fewer shapes or to restrict them further?  
Felsner et al.~\cite{cit:mfcs} asked the question whether every planar
graph is the intersection graph of only two shapes, namely $\{L,\Gamma\}$.
(This would also provide a different proof of Scheinerman's conjecture.)
Somewhat inbetween: is every planar graph the intersection graph of
$xy$-monotone orthogonal curves, preferably in the 1-string model?

\bibliography{arxive}{}
\bibliographystyle{plain}

\end{document}